\newcommand{\mc}{\mathcal} \newcommand{\mf}{\mathfrak} \newcommand{\mb}{\mathbb} \newcommand{\on}{\operatorname}
 \newcommand{\slot}{\;\cdot\;} \newcommand{\ber}{\on{Ber}}
\declaretheorem[name=Lemma, numbered=no]{lem}
\declaretheorem[name=Remark, style=definition, numbered=no]{rem}
\title{Super AKSZ construction, integral forms, and the 2-dimensional $\mathcal N=(1,1)$ sigma model}
\abstract{We discuss a natural extension of the AKSZ construction to the case where the source is given by a supermanifold with a chosen integral form. We then focus on the special case with the target given by a Courant algebroid. In the simplest case this leads to the BV version of the super Chern--Simons theory, as developed by Grassi--Maccaferri and Cremonini--Grassi. In the case of exact Courant algebroids we derive the 2-dimensional $\mathcal N=(1,1)$ sigma model on the boundary, together with the Wess--Zumino term, paralleling the approach of Ševera in the bosonic case.}
\author[a]{Ondřej Hulík,}
\author[b]{Josef Svoboda,}
\author[c]{Fridrich Valach}
\affiliation[a]{Theoretische Natuurkunde, Vrije Universiteit Brussel\\ Pleinlaan 2, B-1050 Brussels, Belgium}
\affiliation[b]{Department of Mathematics, University of Miami\\Coral Gables, FL 33146, USA}
\affiliation[c]{Department of Physics, Imperial College London\\Prince Consort Road, London, SW7 2AZ, UK}
\emailAdd{ondra.hulik@gmail.com}
\emailAdd{josefsvobod@gmail.com}
\emailAdd{f.valach@imperial.ac.uk}
\gdef\@fpheader{}
\begin{document}
\maketitle
\section{Introduction}
The AKSZ construction \cite{AKSZ} provides an elegant geometric description of a large class of topological field theories in the Batalin--Vilkovisky (BV) formalism. The key point is that these theories can be cast in a form that generalises the standard nonlinear sigma models, with the fields now being maps between differential graded manifolds instead of ordinary manifolds. As particular cases one obtains the topological A and B-models and the Chern--Simons theory together with its generalisations, the so-called Courant sigma models \cite{Ikeda,Roytenberg2}.

More generally, the AKSZ construction for manifolds with boundaries, with suitable boundary conditions imposed, can be used to produce interesting non-topological theories. For instance, this route was taken in \cite{Severa3} to provide a more conceptual interpretation of the Poisson--Lie T-duality \cite{KS}.

Applying the AKSZ construction in the supersymmetric case is more subtle. One difficulty stems from the fact that general $\mb N\times\mb Z_2$-graded manifolds do not carry any non-degenerate measure \cite{Salnikov}. To consider this in more detail, note that the typical source space in the AKSZ construction is given by the shifted tangent bundle $T[1]M$ of an ordinary (oriented) manifold $M$. Functions on this shifted bundle can be naturally identified with differential forms on $M$ and hence can be integrated --- this provides a measure on $T[1]M$.

The identification of functions on $T[1]M$ with differential forms on $M$ is valid also when $M$ is a supermanifold. However, differential forms on supermanifolds can no longer be naturally integrated (in other words, the ``natural'' integral is infinite). One way out is to work with integral forms \cite{BL}, which form a module over the ring of differential forms. This is the approach taken in the present paper.

More precisely, we fix an integral form, considering it as a part of the initial data for defining a theory. This provides a well-defined integral of differential forms (simply by multiplying differential forms with the chosen integral form and then integrating). Even though this construction does not lead to a proper measure, we will argue that the AKSZ procedure can still be carried through, with only a minor modification. Our main point is then the construction of the 2-dimensional $\mc N=(1,1)$ sigma model, arising on a boundary of a $3|2$-dimensional topological Courant sigma model, paralleling the work of Ševera \cite{Severa3} in the bosonic case.

One advantage of this approach is the natural appearance of the full version of the model, including the Wess--Zumino term \eqref{eq:final} depending on the choice of the integral form.
The formalism and results developed in this work provide also a natural framework for the study of the Poisson--Lie T-duality (following \cite{Severa3}) for $\mc N=(1,1)$ models. We leave this to a future work.

The paper is organised as follows. We start by reviewing briefly the theory of NQ and differential graded (dg) manifolds and supermanifolds. We proceed to discuss the super AKSZ construction, first in the setup without boundary and then in the presence of boundary. We illustrate the construction in the case of super Chern--Simons theory, following \cite{GM} and \cite{CG}. Afterwards, we discuss in detail the geometric structures present on the superstring worldsheet. We finish by reconstructing the $\mc N=(1,1)$ model. We include two appendices, one with a brief introduction to the theory of integral forms, and the second one with a coordinate-free description of the canonical integral form on the worldsheet.
All (super)manifolds appearing in this text are taken to be oriented. 

\subsection*{Acknowledgement}
The authors would like to thank Alberto S.\ Cattaneo for a helpful discussion.
O.\,H.\ was supported by the FWO-Vlaanderen through the project G006119N and by the Vrije Universiteit Brussel through the Strategic Research Program ``High-Energy Physics''. F.\,V.\ was supported by the Early Postdoc Mobility grant P2GEP2\underline{\phantom{k}}188247 and the Postdoc Mobility grant P500PT\underline{\phantom{k}}203123 of the Swiss National Science Foundation.

\section{Prerequisites}
    We assume some basic familiarity of the reader with supermanifolds. (For a pedagogical exposition on this topic we recommend \cite{Witten}.) For convenience we include a short review of integral forms in Appendix \ref{ap:if}. Understanding a supermanifold as a space whose algebra of functions is $\mb Z_2$-graded, we can make a small modification of the concept and define NQ and dg manifolds, as follows.
  \subsection{NQ and dg (super)manifolds}
    An \emph{NQ manifold} is a space $M$ whose algebra of functions is $\mb N$-graded\footnote{We take $\mb N:=\mb Z_{\ge 0}$.}, equipped with a vector field $Q$ of degree $1$ (sometimes called the \emph{differential}), which satisfies $Q^2=0$. Every NQ manifold carries a canonical \emph{Euler vector field} $E$, which acts on functions of a homogeneous degree by $Ef=(\deg f)f$. The degree zero part of an NQ manifold is called its \emph{base} and is denoted by $M_0$.
    
    An easy example of an NQ manifold is the shifted tangent bundle $T[1]N$ of an ordinary manifold $N$, where the notation means that the degree of the fiber coordinates is shifted by 1. Functions on $T[1]N$ can be identified with differential forms on $N$. The vector field $Q$ is given by the de Rham differential.
    
    On any NQ manifold, $Q$ vanishes on the base of $M$. Consequently, it induces a complex on the tangent space at $M_0$, called the \emph{tangent complex}. An NQ manifold is called \emph{acyclic} if the cohomology of this complex vanishes at every point.
    
    An \emph{NQ (pre)symplectic manifold} is an NQ manifold with a (pre)symplectic form of some definite degree $n$ (i.e.\ $\mc L_E\omega=n\omega$) such that $\mc L_Q\omega=0$.
    
    More generally, in the case where the $\mb N$-grading is replaced by a $\mb Z$-grading, we talk instead about \emph{dg manifolds} and \emph{dg (pre)symplectic manifolds}, respectively.
    
    On any dg pre-symplectic manifold with $n\neq -1$ we have that the vector field $Q$ is Hamiltonian, for some function $H$ of degree $n+1$, i.e.\ $i_Q\omega=dH$. (We can just take $H=\tfrac1{n+1}i_Ei_Q\omega$.) The condition $Q^2=0$ translates to the \emph{classical master equation} $\{H,H\}=0$.
    
    In the singular case $n=-1$, a dg symplectic manifold with a choice of Hamiltonian for $Q$ is called a \emph{classical BV manifold}.
    
    In this paper, we will be interested in the merger of the two worlds, namely we will work also with \emph{NQ/dg (presymplectic) supermanifolds}, which carry an extra $\mb Z_2$-grading, independent of the $\mb N/\mb Z$-grading --- these spaces will be denoted by calligraphic letters. The commutative properties of objects (functions, forms, etc.) on a dg supermanifold are governed by the \emph{total parity}, that is by the sum of the $\mb N/\mb Z$ and $\mb Z_2$-parities. For instance, we require that $Q$ is $\mb Z_2$-even, so that its overall parity is odd. Similarly, we require $\omega$ to be $\mb Z_2$-even.
  \subsection{Courant algebroids}\label{subsec:CA}
    NQ symplectic manifolds with $n=2$ are called \emph{Courant algebroids} \cite{LWX,Severa, Roytenberg}. In this case, we can (locally) choose coordinates $x^i$, $e^\alpha$, $p_i$ of degrees $0$, $1$, and $2$, respectively, such that
    \[\omega=dp_idx^i+\tfrac12h_{\alpha\beta}de^\alpha de^\beta,\]
    with $h_{\alpha\beta}$ constant. Furthermore, the Hamiltonian has the form
    \[H=\rho^i_\alpha(x)e^\alpha p_i-\tfrac16 c_{\alpha\beta\gamma}(x)e^\alpha e^\beta e^\gamma,\]
    for some $\rho$ and $c$, constrained by the classical master equation.
    
    There are two particularly important classes of Courant algebroids. First one comes from \emph{quadratic Lie algebras} $\mf g$, i.e.\ Lie algebras with an invariant non-degenerate symmetric bilinear form. This induces a Courant algebroid structure on $\mf g[1]$ given as follows. We can interpret any basis $e^\alpha$ of $\mf g^*$ as linear coordinates on $\mf g[1]$ of degree 1 --- the symplectic form and the Hamiltonian then take the form
    \begin{equation}\label{eq:qla}
    \omega=de^\alpha d e_\alpha,\qquad H=-\tfrac16 c_{\alpha\beta\gamma}e^\alpha e^\beta e^\gamma,
    \end{equation}
    where $c$ are the structure constants of the Lie algebra and we have used the bilinear form to lower the indices.
    
    The second interesting class is given by acyclic (also known as \emph{exact}) Courant algebroids. The famous result of \v Severa \cite{Severa,Severa2} says that these always take the form
    \begin{equation}\label{eq:ex}
      M\cong T^*[2]T[1]M_0,
    \end{equation}
    with the standard symplectic form on the cotangent bundle, and with the Hamiltonian given by
    \[H=d-\eta,\]
    for some $\eta\in\Omega_{cl}^3(M_0)$. Here we understand the de Rham differential $d$ as a vector field on $T[1]M_0$ and hence as a linear function (of degree 3) on $T^*[2]T[1]M$. Similarly, $\eta$ is understood as a function on $M$ pulled back from $T[1]M_0$ along the projection map $T^*[2]T[1]M_0\to T[1]M_0$. However, the identification \eqref{eq:ex} is not unique --- it is easy to see that different choices of this identification lead to $\eta$'s differing by exact 3-forms. This leads to the classification of exact Courant algebroids over a given $M_0$, by $H^3(M_0,\mathbb{R})$. 
    
    Explicitly, if we pick coordinates $x^i$ on $M_0$, we automatically get a set of coordinates $x^i$, $\xi^i$, $\pi_i$, and $p_i$ on $T^*[2]T[1]M_0$ of degrees $0$, $1$, $1$, and $2$, respectively. We then have
    \[\omega=dp_idx^i+d\pi_id\xi^i,\qquad H=p_i\xi^i-\tfrac16\eta_{ijk}(x)\xi^i\xi^j\xi^k.\]
    
    Finally, a \emph{generalised metric} on an (arbitrary) Courant algebroid is a symplectic involution which preserves the base, i.e.\ a diffeomorphism $R$ with $R^*\omega=\omega$, $R^2=\on{id}$, and $R|_{M_0}=\on{id}_{M_0}$.
    One can always (locally) find adapted coordinates $x^i$, $e^a$, $e^{\dot a}$, $p_i$ of degrees $0$, $1$, $1$, $2$, such that
    \[\omega=dp_idx^i+\tfrac12h_{ab}de^ade^b+\tfrac12h_{\dot a\dot b}de^{\dot a}de^{\dot b},\qquad R^* x^i=x^i,\quad R^*e^a=e^a,\quad R^* e^{\dot a}=-e^{\dot a},\quad R^*p_i=p_i,\]
    with $h_{ab}$ and $h_{\dot a\dot b}$ constant (see \cite{Valach}). For simplicity, we shall also demand throughout the text that there is the same number of $e^a$'s and $e^{\dot a}$'s, and also that $h_{ab}$ is a positive definite matrix.\footnote{Neither of those assumptions (which are standard in the literature, as they correspond to Riemannian setups) are really necessary, but they will simplify the exposition.}
    
    Finally, for an exact Courant algebroid equipped with a generalised metric, we can find a unique identification \eqref{eq:ex} for which we have
    \begin{equation}\label{eq:ECAmetric}
      R^*x^i=x^i,\qquad R^*\xi^i=(g^{-1})^{ij}\pi_j,\qquad R^*\pi_i=g_{ij}\xi^j,\qquad R^*p_i=p_i,
    \end{equation}
    for some Riemannian metric $g$ on $M_0$ (see e.g.\ \cite{SV}). The data of a generalised metric on an exact Courant algebroid thus translates into a pair $(g,\eta)$, where $\eta$ is the concrete closed 3-form associated to the identification \eqref{eq:ex}.
    
\section{Super AKSZ construction}
\subsection{The case without boundary} \label{subsec:without}  
  In order to define a super version of the AKSZ model \cite{AKSZ}, we will require the following choice of data:
  \begin{itemize}
    \item a closed supermanifold $\mc Y$ with a closed integral form $\mu$ of codimension $m$,
    \item a (closed) NQ symplectic manifold $M$ with $\deg \omega=n$ (we will assume $n\neq0$).
  \end{itemize}
  First, from $\mc Y$ we construct an NQ supermanifold $T[1]\mc Y$. The integral form induces a ``pseudo-measure'' on $T[1]\mc Y$, given by \[\int_{T[1]\mc Y} f:=\int_{i.f.} f\mu,\]
  which satisfies $\int_{T[1]\mc Y} Q f=0$. This is not an honest measure, since there exist functions $f$ which satisfy $\int_{T[1]\mc Y} fg=0$ for any other function $g$ --- we will call such functions $f$ \emph{degenerate}. Nevertheless, we will see that the pseudomeasure will be sufficient for the present construction.
  
    Consider now the space $\on{Maps}(T[1]\mc Y,M)$, consisting of all maps (not only the degree-preserving ones) --- this is an infinite-dimensional dg supermanifold, with the differential induced by the differentials on $T[1]\mc Y$ and $M$. The symplectic form $\omega$ on $M$, together with the ``pseudo-measure'' on $T[1]\mc Y$ induce a presymplectic form on $\on{Maps}(T[1]\mc Y,M)$ of degree $n-m$, given simply by
    \[T_\varphi\on{Maps}(T[1]\mc Y,M)\cong \Gamma(\varphi^*TM)\ni V,W\quad \mapsto\quad \int_{T[1]\mc Y}\omega(V,W).\]
    The differential on $\on{Maps}(T[1]\mc Y,M)$ is Hamiltonian, given by the AKSZ action
    \begin{equation}\label{eq:akszion}
    S(\varphi)=\int_{T[1]\mc Y} i_Q\varphi^*\alpha-\varphi^*H,
    \end{equation}
    where $\alpha$ is any potential for $\omega$, i.e.\ $d\alpha=\omega$ (this always exists, for instance we can take $\alpha=\tfrac1n i_E\omega$).
    
  Finally, modding out by the null leaves of the presymplectic form, we obtain an honest (infinite-dimensional) dg symplectic supermanifold, with the Hamiltonian given by the same formula \eqref{eq:akszion}.
    In the particular case where $n=m-1$, we thus get a classical BV manifold.
  
\subsection{The case with boundary}
  Let us now modify the setup from above by adding a boundary. Focusing on the case with $n=m-1$, we now require the following data:
  \begin{itemize}
    \item a supermanifold $\mc Y$ with an integral form $\mu$ of codimension $n+1$, with a boundary $\partial\mc Y=\mc S$ (with $\dim\mc Y-\dim\mc S=1|0$),
    \item an NQ symplectic manifold $M$, with $\deg\omega=n$ (we will again assume $n\neq0$).
  \end{itemize}
  We can still produce a space of maps $\on{Maps}(\mc Y,M)$, but the resulting differential will not preserve the presymplectic form due to the presence of the boundary.
  One way to remedy this is to impose a boundary condition, as follows.
  
  First, note that $\mu|_\mc S$ gives rise to a ``pseudo-measure'' on the NQ manifold $\mc S$.
  By the arguments above, the space $\on{Maps}(T[1]\mc S,M)$ has a presymplectic form of degree 0.
  Let now $\mc L\subset \on{Maps}(T[1]\mc S,M)$ be a dg isotropic submanifold. Then
  \[\{\varphi\in\on{Maps}(T[1]\mc Y,M)\text{ s.t. } \varphi|_{T[1]\mc S}\in\mc L\},\]
  the space of maps with the boundary condition $\mc L$, is dg presymplectic.
  
  If the boundary condition is preserved by the Euler vector field on $M$ (and $n\neq 0$), we can write the Hamiltonian as
  \begin{equation}\label{eq:akszbdry}
  S(\varphi)=\int_{T[1]\mc Y} i_Q\varphi^*\alpha-\varphi^*H,
  \end{equation}
  with the specific choice $\alpha:=\tfrac1n i_E\omega$ (see \cite{PSV}). 
  
  Modding out by the null leaves of the presymplectic form, we again get a classical BV manifold.

\section{Example without boundary: Super Chern--Simons theory}
  In this section we apply the previously introduced super AKSZ construction with the target given by a simple Courant algebroid --- namely a shifted quadratic Lie algebra $\mf g[1]$ --- leading to the supersymmetric Chern--Simons theory as constructed by Grassi--Maccaferri \cite{GM}. However, since in the present framework we consider maps of arbitrary degree (i.e.\ not necessarily degree-preserving maps), we obtain automatically the full BV description of this theory, described in \cite{CG}, analogously to the standard BV formulation of the ordinary Chern--Simons.

The supermanifold $\mc Y$ is now given by the flat 3-dimensional $\mc N=1$ superspace, i.e.\ $\mc Y=\mb R^{3|2}$. Local coordinates are taken to be $x^a,\theta^\alpha$, with $a\in\{1,2,3\}$ and $\alpha\in\{1,2\}$, while the target has coordinates $e^A$, parametrizing the Lie algebra generators.\footnote{In this section we temporarily change the form (Latin, Greek, capital) of the indices, in order to match better the notation in \cite{GM}.}

The maps $T[1]\mathcal{Y} \rightarrow \mathfrak{g}[1]$ are denoted by $\mathcal{A}$. Decomposing this into components, we get
  \begin{equation}\label{eq:fullA}
    \mathcal{A} = A_0 + A_1 + A_2 + A_3+\ldots
  \end{equation}
In more physics terms these are just regular forms on the supermanifold $\mc Y$ with values in the Lie algebra $\mf g$, with the subscript denoting the form degree. The degree-preserving maps correspond to $\mc A=A_1$, while in the general (not necessarily degree-preserving) case we work with the full multiform $\mc A$.
Note that although the expansion \eqref{eq:fullA} continues indefinitely, the only components appearing in the action will be the first four, as spelled out in the expression. Their physical meaning is as follows:
\begin{center}
\begin{tabular}{c|c|c|c}
    $A_0$ & $A_1$ & $A_2$ & $A_3$
    \\
    ghosts & fields & antifields & antighosts
    \end{tabular}
\end{center}

The suitable integral form \cite{GM}, derived from the requirements of supersymmetry and closure, is 
  \begin{equation*}
    \mu=(dx^a + \gamma^a_{\gamma \delta} \theta^{\gamma} d \theta^{\delta} )
    (dx^b + \gamma^b_{\epsilon \zeta} \theta^{\epsilon} d \theta^{\zeta} )
    \gamma_{ab}^{\alpha \beta} \partial_{d\theta^\alpha} \partial_{d\theta^\beta}
    \left(
    \delta ( d\theta^{1} )
    \delta ( d\theta^{2} )
    \right).
  \end{equation*}
Plugging \eqref{eq:qla} into the general action \eqref{eq:akszion} gives 
\begin{equation*}
    S(\mc A)=\tfrac12\int_{i.f.} \mu (\mathcal{A}_A d\mathcal{A}^A + \tfrac{1}{3} c_{ABC} \mathcal{A}^A  \mathcal{A}^B  \mathcal{A}^C).
\end{equation*}
This is a full BV super description the of supersymmetric Chern--Simons theory. Writing the action in components, and using $\langle\slot,\slot\rangle$ for the inner product on $\mf g$, we have
\begin{equation}\label{eq:scsfull}
S=    \int_{i.f.} \mu \left( 
    \tfrac{1}{2} \langle A_1, d A_1\rangle+\tfrac{1}{6}\langle A_1,[A_1,A_1]\rangle
    + 
    \langle A_2, dA_0 \rangle + \langle A_2, [A_1,A_0]\rangle+\tfrac12\langle A_3,[A_0,A_0]\rangle
    \right).
\end{equation}
 If we keep only the degree-preserving part, i.e.\ take $\mc A=A_1$, we are left with the first two terms, corresponding to the Grassi--Maccaferri action \cite{GM}.

Writing the first two terms in terms of components produces
\begin{equation}\label{eq:components}
    \tfrac12\int_{\ber}
    \left(
    \gamma^{\alpha \gamma}_{ac} \epsilon^{abc} \langle A_{\alpha}, F_{b \gamma}\rangle 
    +
    \gamma^{\beta \gamma}_{bc} \epsilon^{abc} \langle A_{a}, F_{\beta \gamma}\rangle
    -
    \tfrac{1}{6} \gamma_a^{\alpha \beta}\langle A_{\alpha}, [A_{\beta}, A^a]\rangle \right)[dx^1,dx^2,dx^3| d\theta^1,d\theta^2 ].
\end{equation}
Following \cite{GM}, we now impose the conventional constraint $F_{\alpha\beta}=0$ (which corresponds to a subset of the equations of motion), which restricts the field $A_1$ to the form
\begin{equation*}
            A_1 = 
        \left( a_a + \lambda \gamma_a \theta +\dots \right) dx^a +
        \left( a_a (\gamma^a \theta)_{\alpha} + \tfrac12\lambda_{\alpha} \theta^1\theta^2 \right) d\theta^{\alpha}.
\end{equation*}
The fields $a_a$ and $\lambda_\alpha$ correspond to the correct degrees of freedom of the $3$-dimensional $\mc N=1$ multiplet, namely to the ordinary gauge field and gaugino, respectively, and \eqref{eq:components} reduces to
  \[S=\int d^3x \left(\tfrac{1}{2} \langle a, d a\rangle+\tfrac{1}{6}\langle a,[a,a]\rangle+\tfrac12\epsilon^{\alpha\beta}\lambda_\alpha\lambda_\beta\right).
\]

  As is usual in the BV framework, one can read off the gauge transformations of the field $A_1$ directly from the action \eqref{eq:scsfull}, by looking at the terms involving $A_2$ \cite{CG}. This means that 
  \begin{equation} \label{eq:trans}
    \delta A_1 = dc + [A_1,c],
  \end{equation}
  where $c$ is a Lie algebra-valued function on $\mb R^{3|2}$.
  However, we only wish to keep the gauge transformations that do not change the conventional constraint. This means that $c$ has to have the form
  \begin{equation*}
         c = \alpha  + (\theta \gamma^a \theta) \partial_{x^a} \alpha  
  \end{equation*}
  where the gauge parameter $\alpha$ only depends on $x^a$. Inserting this back into \eqref{eq:trans}, we obtain the correct transformation of the gauge field $a$:
\begin{equation*}
     \delta a = d\alpha + [a,\alpha].
\end{equation*}
Thus we see that \cite{GM} extends to an AKSZ model, in a way compatible with the conventional constraint.

\section{Worldsheet}
    We will now consider a special case, relevant for superstring theory. We follow the exposition in \cite{Witten2}.\footnote{A minor difference is that in the present text we consider the real version, instead of complex one discussed in \cite{Witten2}.} We will take our \emph{worldsheet} $\mc S$ to be a (real) supermanifold of dimension $2|2$, equipped with two complementary integrable distributions $\mc R$ and $\bar {\mc R}$ of rank $1|1$, each of which in turn containing a maximally nonintegrable distribution of rank $0|1$, denoted $\mc D$ and $\bar{\mc D}$. Maximal nonintegrability means that if (locally) $D$ is an everywhere nonvanishing section of, say $\mc D$, then $D^2=\tfrac12[D,D]$ and $D$ are linearly independent at every point. We thus have
    \[T\mc S=\mc R\oplus \bar{\mc R},\qquad \mc D\subset \mc R,\quad \bar{\mc D}\subset \bar{\mc R}.\]
  \subsection{Explicit description}
    In particular, the space $\mc S$ can be locally seen as a product $\mc U\times\bar{\mc U}$, with $\dim \mc U=\dim \bar{\mc U}=1|1$, such that $\mc R$ and $\bar{\mc R}$ can be identified with tangent bundles of $\mc U$ and $\bar{\mc U}$, respectively. Focusing on the first factor, we can locally find coordinates $\sigma$ and $\vartheta$ on $\mc U$ such that $\mc D$ is spanned by $\partial_\vartheta+\vartheta\partial_\sigma$, with $\sigma$ even and $\vartheta$ odd. Such coordinates are called \emph{superconformal}.
    
    Similarly, we can find coordinates $\bar\sigma$ and $\bar\vartheta$ on $\bar{\mc U}$ such that $\bar{\mc D}$ is spanned by $\partial_{\bar\vartheta}+\bar\vartheta\partial_{\bar\sigma}$.
  \subsection{Involution}    
    There are also some other important structures induced by the distributions on the worldsheet. For instance, we have an operator on $T^*\mc S$ which acts on $\mc R$ and $\bar{\mc R}$ as $1$ and $-1$, respectively. In terms of superconformal coordinates
    \[d\sigma \mapsto d\sigma,\qquad d\vartheta\mapsto d\vartheta,\qquad d{\bar\sigma}\mapsto -d{\bar\sigma},\qquad d{\bar\vartheta}\mapsto -d{\bar\vartheta}.\]
    This lifts to an involution $\star$ on $T[1]\mc S$, by acting trivially on the degree 0 coordinates.\footnote{The resemblance of $\star$ and the Hodge start $*$ is not accidental.}
  \subsection{Integral form}
    The worldsheet also comes equipped with a natural integral form. In terms of superconformal coordinates, this is given by $\mu=\nu\bar\nu$, where 
    \begin{equation}\label{eq:if}
      \nu=(d\sigma-\vartheta d\vartheta)\delta'\!(d\vartheta)
    \end{equation}
    and similarly for $\bar\nu$. This expression is independent of the choice of superconformal coordinates. Let us show this in the case of an infinitesimal coordinate transformation. A more conceptual proof can be found in the Appendix \ref{app:form}.
    
    The most general infinitesimal coordinate transformations on $\mc U$ preserving the superconformality are given by
    \[\delta \sigma=g(\sigma)-f(\sigma)\vartheta,\qquad \delta\vartheta=f(\sigma)+g'(\sigma)\tfrac{\vartheta}2,\]
    where $f$ and $g$ are odd and even functions, respectively \cite{Witten2}.\footnote{Both these functions can depend on extra even and odd parameters (``moduli''). In particular $f$ can be odd despite the fact that it does not have any $\vartheta$ dependence.} This corresponds to the vector field $U_f+V_g$, where
    \[U_f=f(\sigma)(\partial_\vartheta-\vartheta\partial_\sigma),\qquad V_g=g(\sigma)\partial_\sigma+g'(\sigma)\tfrac{\vartheta}2\partial_\vartheta.\]
    We want to show that $\mc L_{U_f}\nu=\mc L_{V_g}\nu=0$.
    First, we calculate\footnote{We use the fact that $\mc L_X\delta^{(m)}(d\vartheta)=(d\mc L_X\vartheta)\delta^{(m+1)}(d\vartheta)$.}
    \[\mc L_{U_f}\nu=[\mc L_{U_f}(d\sigma-\vartheta d\vartheta)]\delta'\!(d\vartheta)+(d\sigma-\vartheta d\vartheta)\mc L_{U_f}\delta'\!(d\vartheta)=-2f'd\sigma\,\vartheta\, \delta'(d\vartheta)-f'd\sigma\,\vartheta\, d\vartheta\,\delta''\!(d\vartheta),\]
    which vanishes since $y\delta''(y)=-2\delta'(y)$ for even $y$ (by the integration by parts). Second,
    \[\mc L_{V_g}\nu=[\mc L_{V_g}(d\sigma-\vartheta d\vartheta)]\delta'\!(d\vartheta)+(d\sigma-\vartheta d\vartheta)\mc L_{V_g}\delta'\!(d\vartheta)=(d\sigma-\vartheta d\vartheta)g'\delta'(d\vartheta)+\tfrac12(d\sigma-\vartheta d\vartheta)a'd\vartheta\,\delta''(d\vartheta),\]
    which is zero for the same reason.
    
  \subsection{Pseudo-measure}\label{subsec:measure}
    For a fixed choice of superconformal coordinates we have the frame \[\partial_\sigma \qquad\partial_{\bar\sigma}\qquad \partial_\vartheta+\vartheta\partial_\sigma\qquad \partial_{\bar\vartheta}+\bar\vartheta\partial_{\bar\sigma},\] and the corresponding dual coframe
    \[\chi:=d\sigma-\vartheta d\vartheta\qquad \bar\chi:=d\bar\sigma-\bar\vartheta d\bar\vartheta\qquad \psi:=d\vartheta\qquad \bar\psi:=d\bar\vartheta.\]
    In particular, this defines coordinates on $T[1]\mc S$ --- for convenience we list these, together with their $\mb N$-degrees and total parity:
    \begin{center}
      \begin{tabular}{c|c|c|c|c|c|c|c|c}
       & $\sigma$& $\bar\sigma$& $\vartheta$& $\bar\vartheta$& $\chi$& $\bar\chi$& $\psi$& $\bar\psi$\\\hline
       $\mb N$-degree & 0 & 0 & 0 & 0 & 1 & 1 & 1 & 1\\\hline
       total parity & e & e & o & o & o & o & e & e
      \end{tabular}
    \end{center}
    We again define the pseudo-measure on $T[1]\mc S$ by
    \[\int_{T[1]\mc S}f:=\int_{i.f.}f\,\mu.\]
    Explicitly, if $f$ is a function on $T[1]\mc S$, supported in a domain in $\mc S$ covered by some pseudoconformal coordinates, then by expanding $f$ in $\chi$'s and $\psi$'s we have
    \begin{equation}\label{eq:Sintegral}
      \int_{T[1]\mc S} f=\int_{\on{Ber}} f^{\psi\bar\psi}(\sigma,\bar\sigma,\vartheta,\bar\vartheta)\,[d\sigma,d\bar\sigma|d\vartheta,d\bar\vartheta],
    \end{equation}
    where $f^{\psi\bar\psi}$ is the coefficient of $f$ next to $\psi\bar\psi$.
\section{The \texorpdfstring{$\mc N=(1,1)$}{N=(1,1)} supersymmetric sigma model}
  We will now show how to obtain the $\mc N=(1,1)$ supersymmetric sigma model on the boundary of a Courant sigma model in the present super AKSZ context, following the construction from \cite{Severa3} in the bosonic case.
  
  Suppose we have the following data:
  \begin{itemize}
    \item a supermanifold $\mc Y$ of dimension $3|2$, with a closed integral form $\mu$ of codimension 3 and with a boundary $\mc S$ equipped with a worldsheet structure (in the above sense); we will assume that $\mu$ restricted to the boundary coincides with the worldsheet integral form,
    \item an exact Courant algebroid $M$ with a \emph{generalised metric} $R$.
  \end{itemize}
  
    We will now construct the boundary condition $\mc L\subset \on{Maps}(T[1]\mc S,M)$, using the worldsheet structure on $\mc S$ and the generalised metric on $M$. First, $\mc L$ lies in the subspace where all coordinates of positive degree have been set to zero,\footnote{If we choose local coordinates on the space of maps so that we can locally identify it with a graded vector space, then the subspace $\mc L$ lies in the non-negative degrees. Such boundary conditions were called \emph{ghostless} in \cite{PSV}.} while we keep all coordinates of negative degrees. This means that in order to define it, we only need to specify the degree zero part of $\mc L$, i.e.\ \[\mc L_0:=\mc L\cap \on{Maps}_0(T[1]\mc S,M),\] where $\on{Maps}_0$ stands for the space of maps preserving the $\mb N$-degree. Note that any $\mc L$ of this form will be automatically preserved by the differential on $\on{Maps}(T[1]\mc S,M)$.
    
    We then set 
    \begin{equation}\label{eq:bccsm}
      \mc L_0=\{\varphi\in\on{Maps}_0(T[1]\mc S,M)\mid R\circ \varphi=\varphi\circ \star\}.
    \end{equation}
    
    More concretely, choosing the identification $M\cong T^*[2]T[1]M_0$ for which we have \eqref{eq:ECAmetric}, the boundary condition implies that on the boundary
    \[p_i\in \Omega^{\ge 2}(\mc S), \qquad \pi_i\in \Omega^{\ge 1}(\mc S), \qquad \xi^i\in \Omega^{\ge 1}(\mc S), \qquad x^i\in \Omega^{\ge 0}(\mc S),\]
    while the lowest-degree components, which we will denote by $\bar{p}_i$, $\bar{\pi}_i$, $\bar{\xi}^i$, $\bar{x}^i$ --- corresponding to maps preserving the $\mb N$-grading --- are constrained by \eqref{eq:bccsm}. The latter, in particular, keeps $\bar x^i$ arbitrary while it imposes\footnote{Strictly speaking, since $\star$ now acts on functions, we should correctly write $\star^*$. We hope the kind reader will forgive us for not doing that.} \[\bar\pi_i=\star g_{ij}(\bar x)\bar\xi^j.\]
    
    Since $\mc L$ is preserved by the Euler vector field on $M$, we get a BV space with the Hamiltonian given by \eqref{eq:akszbdry}, which becomes
    \[S=\int_{T[1]\mc Y}p_idx^i+\tfrac12 \pi_id\xi^i+\tfrac12\xi^id\pi_i-p_i\xi^i+\tfrac16 \eta_{ijk}(x)\xi^i\xi^j\xi^k,\]
    for $x^i$, $\xi^i$, $\pi_i$, $p_i$ differential forms on $\mc Y$ of arbitrary degree, constrained only by the boundary condition.
    Using $\int_{T[1]\mc Y}\, df=\int_{T[1]\mc S} f$, we get
    \[S=\int_{T[1]\mc Y}p_i(dx^i-\xi^i)+\pi_id\xi^i+\tfrac16 \eta_{ijk}(x)\xi^i\xi^j\xi^k+\tfrac12\int_{T[1]\mc S}\pi_i\xi^i.\]
    
    Following \cite{Severa3}, we now wish to integrate $p_i$ in the bulk $\mc Y$ out, i.e.\ see it as a Lagrange multiplier, imposing the constraint $\xi^i=dx^i$. There is however a small subtlety, arising from the fact that we do not have an honest measure. We are thus led to the following weaker constraint:
    \[\xi^i=dx^i+\Xi^i,\]
    where $\Xi^i$ is a degenerate function on $T[1]\mc Y$ (see Subsection \ref{subsec:without}). 
    Fortunately, it follows that both $d\Xi^i$ and $\Xi^i|_{T[1]\mc S}$ (and also $\star \Xi^i|_{T[1]\mc S}$) are degenerate as well, and hence $\Xi^i$ drops out from our action.
    
    Putting things together, using the boundary condition, and noting that all higher form components of $x^i$ drop out,\footnote{This is because any form of degree higher than 3 in the bulk and higher than 2 on the boundary are degenerate and they do not contribute to the integral.} we are left with the simple action
    \[S(x)=-\tfrac12\int_{T[1]\mc S}g_{ij}(\bar x)d\bar x^i\star\! d\bar x^j+\int_{T[1]\mc Y}\tfrac16\eta_{ijk}(\bar x)d\bar x^id\bar x^jd\bar x^k.\]
    
    Setting $y:=\bar x$ and using the frame introduced in Subsection \ref{subsec:measure}, we can write
    \begin{align*}
      dy^i&= \psi(\partial_\vartheta+\vartheta\partial_\sigma)y^i+\bar\psi (\partial_{\bar\vartheta}+\bar\vartheta\partial_{\bar\sigma})y^i+\chi\partial_\sigma y^i+\bar\chi\partial_{\bar\sigma}y^i,\\
      \star dy^i&= \psi(\partial_\vartheta+\vartheta\partial_\sigma)y^i-\bar\psi (\partial_{\bar\vartheta}+\bar\vartheta\partial_{\bar\sigma})y^i+\chi\partial_\sigma y^i-\bar\chi\partial_{\bar\sigma}y^i.
    \end{align*}
    Finally, applying \eqref{eq:Sintegral} we obtain
    \begin{equation}\label{eq:final}
    S(y)=\int_{\on{Ber}}g_{ij}(y)D y^i \bar D y^j[d\sigma,d\bar\sigma|d\vartheta,d\bar\vartheta]+\int_{i.f.} \mu(y^*\eta),
    \end{equation}
    where the first integral goes over the worldsheet $\mc S$, the second over the bulk $\mc Y$, and we introduced the usual supersymmetric derivatives $D:=\partial_\vartheta+\vartheta\partial_\sigma$ and $\bar D:=\partial_{\bar\vartheta}+\bar\vartheta\partial_{\bar\sigma}$. Recall that $y$ is now a map from $\mc S$ (or $\mc Y$) to $M_0$. This is the $\mc N=(1,1)$ supersymmetric string sigma model, with the Wess--Zumino term included.
    
    To get some insight into the nature of this term, note that if $\eta=dB$ for some $B$, we can rewrite the action as 
    \begin{equation}\label{eq:finalsimple}
    S(y)=\int_{\on{Ber}}(g_{ij}+B_{ij})D y^i \bar D y^j[d\sigma,d\bar\sigma|d\vartheta,d\bar\vartheta],
    \end{equation}
    which coincides with the standard action \cite{GHR}.
    
    \begin{rem}
    Formula \eqref{eq:finalsimple} shows that if $\eta$ is exact, then the action is independent of $\mu$, as long as we require that the restriction of $\mu$ to the boundary recovers the canonical worldsheet integral form. The dependence of \eqref{eq:final} on $\mu$ is a bit more subtle: Suppose $\mu$ and $\mu'$ are two closed integral forms which both restrict to the same worldsheet integral form, and call the respective actions \eqref{eq:final} by $S_\mu(y)$ and $S_{\mu'}(y)$. Then the difference $\Delta S(y):=S_\mu(y)-S_{\mu'}(y)$ is invariant under infinitesimal deformations of the map $y$, since when deforming $y$ in the direction of the flow of a vector field $V$ on $M_0$, we have
    \[\delta (\Delta S(y))=\int_{i.f.}(\mu-\mu') (y^*\mc L_V \eta)=\int_{i.f.}d[(\mu-\mu') (y^*i_V \eta)]=0,\]
    since $\mu$ and $\mu'$ coincide on the boundary. Thus the choice of $\mu$ does not affect the equations of motion and consequently does not matter as far as the classical theory is concerned. However, the choice of $\mu$ in \eqref{eq:final} becomes important in the quantum theory, affecting the path integral.
    \end{rem}
    
\appendix{   
\section{About integral forms}\label{ap:if}
  We follow the exposition from \cite{Witten}.
  Let $\mc Y$ be a supermanifold of dimension $p|q$ and $\ber (\mc Y)$ the Berezinian line bundle. Choosing (local) coordinates $\sigma^1,\dots,\sigma^p,\vartheta^1,\dots,\vartheta^q$, with $\sigma$ even and $\vartheta$ odd, produces \[[d\sigma^1,\dots,d\sigma^p|d\vartheta^1,\dots,d\vartheta^q]\in\Gamma(\ber(\mc Y)).\]
  In general, sections of $\ber(\mc Y)$ can be integrated:
  First, on $\mb R^{p|q}$ we define
  \begin{equation}\label{eq:appdef}
  \int_{\on{Ber}} f(\sigma,\vartheta)[d\sigma^1,\dots,d\sigma^p|d\vartheta^1,\dots,d\vartheta^q]:=\int f^{top}d\sigma^1\dots d\sigma^p,
  \end{equation}
  where $f^{top}$ is top part (in terms of number of $\vartheta$'s) of $f$ and on the RHS we use the ordinary integral. This gives a well-defined integral of sections of $\ber(\mc Y)$ which are supported in some coordinate chart, identified with a subset of $\mb R^{p|q}$. To get an integral of an arbitrary section of $\ber(\mc Y)$, we employ the standard procedure using the partition of unity and the linearity of the integral.
  
  Notice, however, that there is no well-defined integral of differential forms on $\mc Y$, due to the lack of top forms (since $d\vartheta$ are even). One instead has to introduce \emph{integral forms} \cite{BL}, as follows.
  
  First, note that differential forms can be seen as functions on $T[1]\mc Y$ (this is a space with local coordinates $\sigma^i,\vartheta^a,d\sigma^i,d\vartheta^a$, which are even, odd, odd, and even, respectively). We then define \emph{integral forms} as distributions on $T[1]\mc Y$, which are supported at the locus $d\vartheta^1=\dots=d\vartheta^q=0$. Any such object can be written using Dirac delta functions and their derivatives as a sum of terms of the form
  \[f_{i\dots j}(\sigma,\vartheta)d\sigma^i\dots d\sigma^j\delta^{(c_1)}(d\vartheta^1)\dots\delta^{(c_q)}(d\vartheta^q).\]
  Integral forms of \emph{top degree} are those of the form
  \[f(\sigma,\vartheta)d\sigma^1\dots d\sigma^p\delta(d\vartheta^1)\dots\delta(d\vartheta^q).\]
  All other integral forms are obtained by applying some number of $d/d\sigma^i$ or $d/d\vartheta^a$ operations to a top form. This number is then called the \emph{codimension} of the integral form. Note that multiplying an integral form of codimension $n$ with a differential form of degree $m$, we get an integral form of codimension $n-m$.
  
  To construct an integral of an integral form, we use the canonical section $s$ of $\ber(T[1]\mc Y)$, given locally by
  \[s=[d\sigma^1,\dots,d\sigma^p,d(d\vartheta^1),\dots,d(d\vartheta^q)|d\vartheta^1,\dots,d\vartheta^q,d(d\sigma^1),\dots,d(d\sigma^p)].\]
  We define the integral of an integral form $\mu$ using a Berezin integral on $T[1]\mc Y$,
  \[\int_{i.f.}\mu:=\int_{\on{Ber}}\mu \,s.\]
  This is nonzero only if $\mu$ is a top integral form. For instance, on $\mc Y=\mb R^{p|q}$ we get
  \[\int_{i.f.}f(\sigma,\vartheta)d\sigma^1\dots d\sigma^p\delta(d\vartheta^1)\dots\delta(d\vartheta^q)=\int f^{top}d\sigma^1\dots d\sigma^p,\]
  reproducing the integral from \eqref{eq:appdef}. In general, we have a bijective map from sections of the Berezinian to top-degree integral forms, \[\Gamma(\ber(\mc Y))\to \on{IF^{top}}(\mc Y),\qquad [d\sigma^1,\dots,d\sigma^p|d\vartheta^1,\dots,d\vartheta^q] \mapsto d\sigma^1\dots d\sigma^p\delta(d\vartheta^1)\dots\delta(d\vartheta^q).\]
  
\section{About the integral form}\label{app:form}
    Suppose $\mc U$ is any supermanifold of dimension $1|1$, together with a maximally non-integrable distribution $\mc D$ of rank $0|1$ (this corresponds to one of the ``halves'' of the worldsheet).
    We want to describe a coordinate-free interpretation of the integral form \eqref{eq:if}.
    
    Let us first show that there is a canonical isomorphism $\ber(\mc U)\cong \mc D^*$, following \cite{Witten2}:
    
    Due to the maximal nonintegrability, if $D\in\Gamma(\mc D)$ is nonvanishing, so is the image of $D^2$ in $\Gamma(T\mc U/\mc D)$. Furthemore, for any function $f$ we have $(fD)^2=f^2D^2+f(Df)D=f^2D^2\,(\on{mod}\, \mc D)$. Thus we obtain a line bundle isomorphism $\mc D\otimes \mc D\cong T\mc U/\mc D$. In particular this implies $\ber T\mc U\cong \ber(\mc D\otimes\mc D)\otimes\ber\mc D\cong (\mc D\otimes \mc D)\otimes \mc D^*\cong \mc D$, on account of $\mc D$ and $\mc D\otimes \mc D$ being an odd and even line bundle, respectively. Thus $\ber \mc U\equiv\ber T^*\mc U\cong \mc D^*$.
    
    Let us now define the map
    \[\lambda\colon \Omega^1(\mc U)\to \Gamma(\mc D^*)\cong \Gamma(\ber(\mc U))\to \on{IF^{top}}(\mc U),\]
    where the first map corresponds to the canonical projection from $T^*\mc U$ to $\mc D^*$.
    We then have:
    \begin{lem}
      The integral form \eqref{eq:if} is the unique integral form $\nu$ of codimension 1 such that for any $\alpha\in\Omega^1(\mc U)$ we have $\alpha\,\nu=-\lambda(\alpha)$.
    \end{lem}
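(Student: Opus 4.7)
My plan is to verify both uniqueness and the defining identity by direct computation in a superconformal chart $(\sigma,\vartheta)$, and then invoke coordinate invariance --- already established for $\nu$ in the main text and built into the definition of $\lambda$ --- to deduce the global statement. For uniqueness I would note that any codimension-$1$ integral form on $\mc U$ can be written locally as $\nu=A(\sigma,\vartheta)\,d\sigma\,\delta'(d\vartheta)+B(\sigma,\vartheta)\,\delta(d\vartheta)$, since the top integral form is $d\sigma\,\delta(d\vartheta)$ and codimension-$1$ generators are obtained either by differentiating in $d\vartheta$ or by dropping $d\sigma$. Multiplying by $\alpha=d\sigma$ kills the first term (because $(d\sigma)^2=0$) and yields $B\,d\sigma\,\delta(d\vartheta)$; multiplying by $\alpha=d\vartheta$ kills the second (since $d\vartheta\,\delta(d\vartheta)=0$) and, using $d\vartheta\,\delta'(d\vartheta)=-\delta(d\vartheta)$, yields $-A\,d\sigma\,\delta(d\vartheta)$. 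Hence the map $\alpha\mapsto\alpha\nu$ recovers both $A$ and $B$, so two codimension-$1$ forms inducing the same map must coincide.

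For existence, I would insert $\alpha=P\,d\sigma+Q\,d\vartheta$ into $\nu=(d\sigma-\vartheta d\vartheta)\delta'(d\vartheta)$. With the total-parity conventions ($d\sigma$ odd, $d\vartheta$ even), the identities $(d\sigma)^2=0$, $d\vartheta\,\delta'(d\vartheta)=-\delta(d\vartheta)$, and $d\vartheta\,\delta(d\vartheta)=0$ collapse the product to $\alpha\nu=-(P\vartheta+Q)\,d\sigma\,\delta(d\vartheta)$. Independently, the projection $T^*\mc U\to\mc D^*$ sends $\alpha$ to $\alpha(D)D^*=(P\vartheta+Q)D^*$, with $D=\partial_\vartheta+\vartheta\partial_\sigma$; unwinding the chain $\mc D^*\cong\ber\mc U\to\on{IF^{top}}(\mc U)$ and using the Berezinian-to-top-form correspondence from the appendix, the generator $D^*$ is sent to $[d\sigma|d\vartheta]$ and hence to $d\sigma\,\delta(d\vartheta)$. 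Matching the two sides verifies $\alpha\nu=-\lambda(\alpha)$ in this chart.

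The hard part will be pinning down the overall sign in the identification $\mc D^*\cong\ber\mc U$: this requires tracing the short exact sequence $0\to\mc D\to T\mc U\to T\mc U/\mc D\to 0$, the isomorphism $T\mc U/\mc D\cong\mc D\otimes\mc D$ sending $D^2\mapsto D\otimes D$, and the odd-line-bundle identities $\ber\mc D\cong\mc D^*$ and $\ber(\mc D\otimes\mc D)\cong\mc D\otimes\mc D$, with careful bookkeeping of the signs arising from contracting $\mc D$ against $\mc D^*$. Once the sign is settled in one superconformal chart, the coordinate-invariance of $\nu$ (established earlier) and of $\lambda$ (by construction) upgrades the identity to all of $\mc U$, completing the proof.
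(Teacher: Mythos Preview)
Your proposal is correct and follows essentially the same approach as the paper: work in superconformal coordinates, write a general codimension-$1$ integral form as a combination of $d\sigma\,\delta'(d\vartheta)$ and $\delta(d\vartheta)$, compute $\lambda(\alpha)$ explicitly via the chain $T^*\mc U\to\mc D^*\to\on{IF^{top}}(\mc U)$, and match against the product $\alpha\,\nu$. The only differences are organisational --- you separate existence and uniqueness while the paper solves for the coefficients in one step, and you add remarks on tracking the sign of $\mc D^*\cong\ber\mc U$ and on globalising --- but the argument is the same.
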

    \begin{proof}
      Pick some local superconformal coordinates $\sigma,\vartheta$. Define $\tau\in\Gamma(\mc D^*)$ by $\langle \tau,\partial_\vartheta+\vartheta\partial_\sigma\rangle=1$. Then the projection $T^*\mc U\to\mc D^*$ takes the form $d\vartheta\mapsto \tau$, $d\sigma\mapsto \vartheta \tau$, while the map $\Gamma(\mc D^*)\to \on{IF^{top}}(\mc U)$ is given by $\tau\mapsto d\sigma\delta(d\vartheta)$. Thus $\lambda(a\,d\sigma+b\,d\vartheta)=(a\vartheta+b) d\sigma\delta(d\vartheta)$.
      On the other hand, the multiplication by a general integral form of codimension 1, $\nu=c\,\delta(d\vartheta)+e\,d\sigma\delta'(d\vartheta)$, gives
      \[(a\,d\sigma+b\,d\vartheta)(c\,\delta(d\vartheta)+e\,d\sigma\delta'(d\vartheta))=a\,d\sigma\, c\,\delta(d\vartheta)-b\,e\,d\sigma\delta(d\vartheta).\]
      Thus, the condition $\alpha\cdot\nu=-\lambda(\alpha)$ gives $c=\vartheta$, $e=1$, and so \[\nu=\vartheta\delta(d\vartheta)+d\sigma\delta'(d\vartheta)=(d\sigma-\vartheta d\vartheta)\delta'(d\vartheta).\qedhere\]
     \end{proof}
}

\end{document}